\def\qed{}
\newcommand{\agent}{robot}
\let\epsilon=\varepsilon
\newcommand{\ie}{i.\,e.}
\newcommand{\DP}{\ell^+}
\newcommand{\DM}{\ell^-}
\newcommand{\etal}{et~al.\ }
\def\proofskname{Proof Sketch}
        {\proofendbox\par\smallskip}
\newtheorem{proposition}{Proposition}
\newenvironment{prop}[1]{\begin{proposition}\label{#1-prop}}%
{\end{proposition}}
\title{Optimal competitive online ray search with an error-prone robot}
\author{Tom Kamphans \and Elmar Langetepe}
\date{February 2005}
\begin{document}

\maketitle

\begin{abstract}
We consider the problem of finding a door along a wall with a blind
robot that neither knows the distance to the door nor 
the direction towards of the door. 
This problem can be solved with the well-known doubling
strategy yielding an optimal competitive factor of 9 with the 
assumption that the robot does not make any errors during
its movements. We study the case that the robot's movement
is erroneous. In this case the doubling strategy is no longer optimal. 
We present optimal competitive strategies that take the error assumption 
into account. The analysis technique can be applied to different error models. 

\smallskip\noindent
{\bf Keywords:} Online algorithm, Online motion planning, competitive ana\-lysis, ray search, errors.
\end{abstract}

\section{Introduction\label{intro-sect}}
Motion planning in unknown environments 
is theoretically well-understood and also practically solved 
in many settings. During the last decade many different objectives
where discussed under several robot models. For a general overview of
theoretical online motion planning problems and its analysis 
see the surveys \cite{b-olsn-98,m-gspno-00,ikkl-ccnt-02,rksi-rnut-93,rksi-rnut-93}

Theoretical correctness results and performance guarantees 
often suffer from idealistic assumptions so that in the worst case a 
correct implementation is impossible. On the 
other hand practioners analyze correctness results and 
performance guarantees mainly statistically or empirically. 
Therefore it is useful to investigate, 
how theoretic online algorithms with idealistic assumptions behave, 
if those assumptions cannot be fulfilled. More precisely, 
can we incorporate assumptions of errors in sensors and motion 
directly into the theoretical sound analysis? We already
successfully considered the behaviour of the well-known pledge algorithm,
see e.\,g.\ Abelson and diSessa \cite{ad-tg-80},
and Hemmerling \cite{h-lp-89},
in the presence of errors \cite{kl-pares-03}.
It has been proven that 
a compass with an error of $\pi/2$ is sufficient to leave an unknown 
maze with a strategy that makes use of counting turning angles. 

The task of finding a point on a line by a blind agent 
without knowing the location of the 
goal was considered by Gal~\cite{g-sg-80,g-csg-89,ag-tsgr-03}
and independently reconsidered by Baeza-Yates \etal \cite{bcr-sp-93}.  
Both approaches lead to the so called doubling strategy, which is a basic paradigma 
for searching algorithms, e.\,g. searching for a point on $m$ rays, see
\cite{g-sg-80}, or approximating the optimal search path, see \cite{fkklt-colao-04}.  

Searching on the line was generalized to searching on $m$ concurrent rays 
starting from a single source, see \cite{bcr-sp-93,g-sg-80}. 
Many other variants were discussed since then, for example
$m$-ray searching with restricted distance (Hipke \etal \cite{hikl-hfplf-99}, 
Langetepe \cite{l-dasas-00}, Schuierer \cite{ls-ussmr-01}), $m$-ray 
searching with additional turn costs (Demaine \etal \cite{dfg-olstc-}),
parallel $m$-ray searching (Hammar \etal \cite{hns-psmr-01}) or
randomized searching (Kao \etal \cite{krt-sueor-96}).
Furthermore, some of the problems were again rediscovered 
by Jaillet \etal \cite{js-ols-01}.

In this paper we investigate how an error in the movement 
influences the correctness and the corresponding competitive 
factor of a strategy. 
The error range, denoted by a parameter $\delta$,  
may be known or unknown to the strategy. 

The paper is organized as follows. In \refsect{doubling} we recapitulate some 
details on $m$-ray searching and its analysis. The error model is 
introduced in \refsect{model}.
In \refsect{unknown} we discuss the case where the strategy is not aware 
of errors, therefore we
analyze the standard doubling strategy 
showing correctness and performance results.
The main result is presented in \refsect{known}. We can prove that 
the optimal  competitive strategy that searches for a goal on a line 
achieves a factor of $1+8\left(\frac{1+\delta}{1-\delta}\right)^2$
if the error range $\delta$ is known. Fortunately, our analysis technique works for 
different error models and is generic in this sense.
Finally, we consider the $m$-ray searching in \refsect{mray}. 
For a summary of the results and factors see \refsect{concl}. 
A preliminary version of this report appeared in \cite{kl-ocolr-05}.

\sect{The standard problem}{doubling}

The task is to find a door in a wall, respectively a point, $t$, on a line. 
The \agent\ does not know whether $t$ is located left hand or
right hand to its start position, $s$, nor does it know the distance from
$s$ to $t$. Baeza-Yates \etal \cite{bcr-sp-93} describe a strategy for solving 
this problem by using a function $f$. $f(i)$ denotes the distance the
\agent\ walks in the $i$-th iteration. If $i$ is even, the \agent\ moves
$f(i)$ steps from the start to the right and $f(i)$ steps back; if $i$ is
odd, the \agent\ moves to the left. It is assumed that the movement
is correct, so after moving $f(i)$ steps from the start point to the right 
and moving $f(i)$ steps to the left, the \agent\ has reached the start
point.
Note, that this does not hold, if there are errors in the movement, 
see for example \reffig{leftdrift}. 

The competitive analysis compares the cost of a strategy to
the cost of an optimal strategy that knows the whole environment.
In our case these cost is given by the 
distance, $d$, to the goal.
With the assumption $d\geq 1$
a search strategy that generates a path of 
length $|\pi_{\mathrm{onl}}|$ 
is called {\em $C$-competitive} if for all possible scenarios
${|\pi_{\mathrm{onl}}| \over d} \leq C$ holds%
\footnote{For $d\geq 1$ the constant that usually appears in the 
definition of the competitive factor can be omitted
see for example \cite{l-dasas-00}.}.
It was shown by Baeza-Yates \etal \cite{bcr-sp-93} that the strategy 
$f(i)=2^i$ yields a competitive factor of 9 and that no other strategy will 
be able to achieve a smaller factor. 

The problem was extended to $m$ concurrent rays. 
It was shown by Gal \cite{g-csg-89} that w.l.o.g.\ a strategy visits the rays 
in a cyclic order and with increasing distances $f(i)<f(i+1)$. The 
optimal competitive factor is given by 
$1+ 2 \frac{m^m} {(m-1)^{m-1} }$ 
and an optimal strategy is defined by $f(i)=\left(\frac{m}{m-1}\right)^i$, see \cite{bcr-sp-93,g-sg-80}

\ssect{Modelling the error}{model}
The \agent\ moves straight line segments of a certain length from the
start point alternately to the left and to the right. Every 
movement can be errorneous, which causes the robot to
move more or less far than expected. However, we require that the \agent s
error per unit is within a certain error bound, $\delta$. 
More precisely, let $f$ denote the length of a movement required by
the strategy---the nominal value--- 
and let $\ell$ denote the actually covered distance,
then we require that
$\ell \in [\, (1-\delta)f, (1+\delta)f ]$ holds for
$\delta\in[\,0,1[$, \ie\ the robot moves at least $(1-\delta)f$
and at most $(1+\delta)f$.
This is a reasonable error model, since
the actually covered distance is in a symmetrical range around 
the nominal value. Another commonly used method is to require
$\ell \in [\,\frac1{1+\delta'}f, (1+\delta')f]$ 
for $\delta'>0$.
This leads to an unsymmetrical range around the desired value, but
does not restrict the upper bound for the error range.
Since both error models may be of practical interest, we give results for
both models, although we give full proofs only for the first model.
We call the first model {\em percentual error}, the second
model {\em standard multiplicative error}. 

\sect{Disregarding the error}{unknown}

In this section, we assume that the \agent\ is not aware of making any
errors. 
Thus, the optimal doubling strategy presented above 
seems to be the best choice for the \agent. In the 
following we will analyze the success and worst case 
efficiency of this strategy with respect to the unknown 
$\delta$. 

\pstexfig{The $i$-th iteration consists of two separate movements,
$\DP_i$ and $\DM_i$. Both may be of different length, causing a 
drift. The vertical path segments are to highlight the single 
iterations, the robot moves on horizontal segments only.}
{drifterror}

Since the errors in the movements away from the door and back towards
the door may be different, the robot may not return to the start point, $s$,
between two interations, see \reffig{drifterror}.
Even worse, the start point of every iteration may continuously 
drift away from the original start point.
Let $\DP_i$ be the length of the movement to the right 
in the $i$-th step and $\DM_i$ be the covered distance to the left.  
Now, the deviation from the start point after the 
$k$-th iteration step, the {\em drift} $\Delta_k$, is 
$$\Delta_k = \sum_{i=1}^k(\DM_i - \DP_i)\,.$$

If the drift is greater than zero, the start point $s_{k+1}$
of the iteration $k+1$ is located left to the original start point,
if it is smaller than zero, $s_{k+1}$ is right hand to $s$. 
Note, that $\DP_i$ equals $\DM_i$ in the error-free case.
We will show that the worst case is achieved, if the robot's drift to the
left is maximal.
The length of the path $\pi_k$ after $k$ iterations is
$$|\pi_k|=\sum_{i=1}^k(\DM_i + \DP_i)\,.$$

\begin{theo}{wc}
In the percentual error model
$[(1-\delta)f, (1+\delta)f]$ with $\delta\in[0,1[$ the \agent\ will find the door with the doubling strategy $f(i)=2^i$, 
if the error $\delta$ is not greater than ${1\over 3}$. The generated path
is never longer than 
$$8 \frac{1+\delta}{1-3\delta}+1$$
times the shortest path to the door%
\footnote{More precisely, the factor is 
$1+8 \frac{1+\delta}{1-3\delta+\varepsilon}$ for an arbitrary small
$\varepsilon$, which is crucial for the case $\delta=\frac13$, but
neglectable in all other cases. For convenience we
omit the $\varepsilon$ in this and the following theorems.}.
\end{theo}

\pstexfig{In the worst case, the start point of every iteration drifts
away from the door. }{leftdrift}

\begin{proof}
We assume that finally the goal is found on the right side. 
The other case is handled analogously. 
For the competitive setting it is the worst, 
if the door is hit in the iteration step ${2j+2}$ to the right side, 
but located just a little bit further away than the rightmost point that 
was reached in the preceeding iteration step~${2j}$. 
In other words, another full iteration to the left has to be done and 
the shortest distance to the goal is minimal in the current situation. 
Additionally we have to consider the case where the goal is exactly one step away from 
the start. We discuss this case at the end. 

\noindent
We want the door to be located closely behind the rightmost point visited 
in the iteration step $2j$. 
Considering the drift $\Delta_{2j-1}$, the distance from the 
start point $s$ to the door is
$$d = \DP_{2j} - \sum_{i=1}^{2j-1}(\DM_i - \DP_i) + \varepsilon\,.$$

\noindent
The total path length is the sum of the covered distances up to
the start point of the last iteration, the distance from this point
to the original start point $s$ (\ie\ the overall drift), and the distance to
the door:
$$|\pi_{\mathrm{onl}}| = \sum_{i=1}^{2j+1}(\DM_i + \DP_i) + \sum_{i=1}^{2j+1}(\DM_i - \DP_i) + d.$$
With this we get the worst case ratio
\begin{equ}{wcequ}
\frac{|\pi_{\mathrm{onl}}|}{d} 
= 1 + \frac{\sum_{i=1}^{2j+1}(2\DM_i)}{\DP_{2j} - \sum_{i=1}^{2j-1}(\DM_i - \DP_i) + \varepsilon} \, .
\end{equ}

We can see that this ratio achieves its maximum if we maximize 
every $\DM_i$, \ie\ if we set $\DM_i$ to $(1+\delta)\,2^i$ in this
error model. 
Now we only have to fix $\DP_i$ in order to maximize the ratio. 
Obviously, the denominator gets its smallest value if every $\DP_i$ 
is as small as possible, therefore we set $\DP_i$ to $(1-\delta)\,2^i$.
So the worst case is achieved
if every step to the right is too short and
every step to left is too long, yielding a maximal drift to the left
and away from the door, see \reffig{leftdrift}%
\footnote{The vertical path segments are to highlight the single 
iteration steps, the robot moves only on horizontal segments.}.
Altogether, we get
\begin{eqnarray}
\frac{|\pi_{\mathrm{onl}}|}{d} 
&=& 1 \!+\! \frac{\sum_{i=1}^{2j+1}(2\DM_i)}{\DP_{2j} - \sum_{i=1}^{2j-1}
(\DM_i \!\!-\! \DP_i) + \varepsilon}
 = 1 \!+\!  {2\,(1+\delta)\,\sum_{i=1}^{2j+1}2^i \over (1-\delta)\,2^{2j} - 2\delta \sum_{i=1}^{2j-1}2^i + \varepsilon}\nonumber\\
&=& 1 + 2\,\frac{(1+\delta)(2^{2j+2}-2) } {(1-3\delta)\,2^{2j}+4\delta + \varepsilon}\\
&<& 1+ 8\,\frac{1+\delta}{1-3\delta}\, .\label{ratio}
\end{eqnarray}
For the case that the goal is exactly one step away from the start we achieve 
the worst case factor $1+4\frac{(1+\delta)}{1}$ which is smaller than the
above worst case. 

Obviously, we have to require that $\delta\leq\frac13$ holds. Otherwise, 
in the worst case the 
distance $(1-3\delta)\,2^{2j}+4\delta$ from the start point does not
exceed the point $4\delta$ 
and we will not hit any goal farther away.
\end{proof} 

\begin{prop}{first}
In the standard multiplicative error model 
$[\frac1{(1+\delta)}f, (1+\delta)f]$ for $\delta>0$, 
the doubling strategy always finds
the goal with
a competitive factor of 
$1+8\,\frac{(1+\delta)^2}{2-(1+\delta)^2}$ 
if $\delta\leq\sqrt{2}-1$ holds.
\end{prop}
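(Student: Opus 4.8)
The plan is to mirror the proof of Theorem~\ref{wc-theo} exactly, changing only the endpoints of the error interval. The structure of the worst-case scenario is identical: the goal is found in iteration step $2j+2$ to the right, placed just behind the rightmost point reached in step~$2j$, so that a full extra iteration is forced and the distance~$d$ to the door is minimal. The ratio to be maximized is therefore again
\begin{equ}{propratio}
\frac{|\pi_{\mathrm{onl}}|}{d}
= 1 + \frac{\sum_{i=1}^{2j+1}(2\DM_i)}{\DP_{2j} - \sum_{i=1}^{2j-1}(\DM_i - \DP_i) + \varepsilon}\,,
\end{equ}
since this expression was derived purely from the geometry of the drift and does not depend on the particular error model. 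The only thing that changes is the range available to each $\DP_i$ and $\DM_i$.

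First I would observe, as before, that the ratio grows when the numerator grows and the denominator shrinks, hence the worst case is maximal leftward drift: each leftward leg $\DM_i$ is made as long as possible and each rightward leg $\DP_i$ as short as possible. In the standard multiplicative model the extreme values are $\DM_i = (1+\delta)\,2^i$ and $\DP_i = \frac1{1+\delta}\,2^i$. I would then substitute these into \refequ{propratio} and carry out the same geometric-series simplification as in the percentual case, using $\sum_{i=1}^{2j+1}2^i = 2^{2j+2}-2$ and collecting the denominator into a leading term in $2^{2j}$ plus a bounded remainder. The leading coefficient in the denominator becomes $\frac1{1+\delta} - (1+\delta) + \frac1{1+\delta} = \frac{2}{1+\delta} - (1+\delta)$, which after multiplying numerator and denominator by $1+\delta$ yields the factor $2-(1+\delta)^2$ claimed in the statement, while the numerator contributes $(1+\delta)^2$.

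Taking the limit as $2^{2j}$ dominates the bounded remainder gives the bound $1 + 8\,\frac{(1+\delta)^2}{2-(1+\delta)^2}$, and as in the first theorem I would check separately that the boundary case in which the goal lies exactly one step from the start produces a strictly smaller factor. Finally, the correctness (success) condition comes from requiring the leading denominator coefficient to stay positive: $2-(1+\delta)^2 > 0$, \ie\ $(1+\delta)^2 < 2$, which is exactly $\delta \leq \sqrt2 - 1$. Beyond this threshold the effective reach $\bigl(\frac{2}{1+\delta}-(1+\delta)\bigr)2^{2j}$ fails to grow, so an arbitrarily distant goal is never overtaken and the strategy ceases to be correct.

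I expect the main obstacle to be bookkeeping rather than ideas: one must verify that the bounded additive remainder in the denominator (the analogue of the $+4\delta$ term) is genuinely $o(2^{2j})$ and of the correct sign, so that the supremum over $j$ is attained in the limit and the strict inequality of \refequ{ratio} carries over. Care is also needed to confirm that substituting the model's extreme endpoints is simultaneously optimal for numerator and denominator, which holds here because $\DM_i$ appears only in the numerator and in the subtracted drift, and $\DP_i$ only positively in the denominator, so the two choices do not conflict.
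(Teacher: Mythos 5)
Your proposal is correct and follows essentially the same route as the paper: the paper also reuses the worst-case ratio from \reftheo{wc}, substitutes $\DM_i=(1+\delta)2^i$ and $\DP_i=2^i/(1+\delta)$, and simplifies the geometric sums to obtain $1+8\,\frac{(1+\delta)^2}{2-(1+\delta)^2}$ with the positivity condition $2-(1+\delta)^2>0$ giving $\delta\leq\sqrt2-1$. Your observation that the residual term $\frac{2}{2^{2j}}((1+\delta)^2-1)$ is positive (so the strict inequality carries over) matches the paper's computation exactly.
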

\begin{proof} We exactly follow the  proof of \reftheo{wc}. The worst case ratio is given by
\refequ{wcequ} and now we maximize this value by  $\DM_i = 2^i(1+\delta)$ and 
 $\DP_i=\frac{2^i}{(1+\delta)}$. Using these settings in (\ref{ratio})
yields
\begin{eqnarray*}\displaystyle
{|\pi_{\rm onl}| \over d}
&\leq & 1+ {2\,\sum_{i=1}^{2j+1}(1+\delta)\,2^i \over
{1\over 1+\delta}\,2^{2j}-\sum_{i=1}^{2j-1}
\left(1+\delta-{1\over 1+\delta} \right)\,2^i + \epsilon}\\
&=& 1 + {2\,(1+\delta)^2\,(4-{2\over 2^{2j}}) \over
1 - ((1+\delta)^2- 1)
\left(1-{2\over 2^{2j}}\right)+{\epsilon\,(1+\delta)\over 2^{2j}} }\\
&=& 1 + {2\,(1+\delta)^2\,(4-{2\over 2^{2j}}) \over
2 - (1+\delta)^2 +
\left({2\over 2^{2j}}((1+\delta)^2-1)\right)+
{\epsilon\,(1+\delta)\over 2^{2j}} }\\
&<& 1+8\cdot {(1+\delta)^2 \over 2-(1+\delta)^2}\,,
\end{eqnarray*}
and 
we achieve a worst case  ratio  of $1+8\,\frac{(1+\delta)^2}{2-(1+\delta)^2}$.
For the denominator $2-(1+\delta)^2\leq 0$ 
holds iff  $\delta > \sqrt{2}-1$.\qed
\end{proof}

\sect{The optimal strategy for known error range}{known}

One might wonder whether there is a strategy which takes the error $\delta$ 
into account and yields a competitive factor smaller than the 
worst case factor of the doubling strategy. 
Intuitively this seems to be impossible, because the doubling strategy 
is optimal in the error-free case. 
We are able to show that 
there is a strategy that achieves a factor of $1+ 8
\left(\frac{1+\delta}{1-\delta}\right)^2$. This is 
smaller than $1+ 8 \frac{1+\delta}{1-3\delta}$ for all $\delta<1$. 

\medskip
\begin{theo}{first}
In the presence of an error up to $\delta$ 
in the percentual error model 
$[(1-\delta)f, (1+\delta)f]$ with $\delta\in[0,1[$,
there is a strategy that meets every goal and achieves 
a competitive factor of
$1+ 8 \left(\frac{1+\delta}{1-\delta}\right)^2$.
\end{theo}

\begin{proof} We are able to design a strategy 
as in the error-free case. We can assume that a strategy $F$ is given by a 
sequence of non-negative values%
\footnote{In the following we use $f_i$ instead of $f(i)$ to omit
parenthesis in the equations.},
$f_1,f_2,f_3,\ldots$, denoting the nominal values 
required by the strategy,
\ie\ in the $i$-th step the strategy wants the robot to move a distance
of $f_i$ to a specified direction--- to the right, if $i$ is even
and to left if $i$ is odd---and to return to the start point
with a movement of $f_i$ to the opposite direction.
Remark, that every reasonable strategy can be described in this way.

As above, let $\DP_i$ and $\DM_i$ denote the length of a 
movement to the right and to the left in the $i$-th step, respectively. 
In the proof of \reftheo{wc} we showed
that every online strategy will achieve a worst case ratio of
\[
\frac{|\pi_{\mathrm{onl}}|}{d} 
= 1 + \frac{\sum_{i=1}^{2j+1}(2\DM_i)}{\DP_{2j} - \sum_{i=1}^{2j-1}(\DM_i -
  \DP_i) + \varepsilon}\;,
\]
which achieves its maximum if every step towards the door is as short
as possible and every step in the opposite direction is as long as
possible, \ie\ $\DM_i=(1+\delta)\,f_i$ and $\DP_i=(1-\delta)\,f_i$.
This yields
\begin{equ}{simplify}
\frac{|\pi_{\mathrm{onl}}|}{d} 
= 1 + 2(1+\delta)\frac{\sum_{i=1}^{2j+1}f_i}{(1-\delta)f_{2j} - 2\delta\sum_{i=1}^{2j-1}f_i + \varepsilon}
\end{equ}

\noindent
For a fixed $\delta$, $1$ and  $2(1+\delta)$ are constant, and 
it is sufficient to find a strategy that minimizes 
\begin{equ}{func}
G_{(n,\delta)}(F):=\frac{\sum_{i=1}^{n+1}f_i}{(1-\delta)f_n - 2\delta\sum_{i=1}^{n-1}f_i}
\;\mbox{~if~}\,n\geq 1
\end{equ}%
and $G_{(0,\delta)}(F):=\frac{f_1}{1}$
where $F$ denotes the strategy $f_1,f_2,f_3,\ldots$  and  $G_{(0,\delta)}(F)$ 
refers
to the worst case after the first iteration step. Note, that we assumed that the goal is at least one step 
away from the start%
\footnote{Alternatively, we can assume that the cost in the start situation is 
subsumed by an additive constant in the definition of the competitive factor.}.

By simple analysis we found out that a strategy $f_i=\alpha^i$ 
asymptotically minimizes $G_{(n,\delta)}(F)$ if $\alpha=2\,\frac{1+\delta}{1-\delta}$ 
holds. In this case we have 
\begin{eqnarray}
 G_{(n,\delta)}(F) &=& \frac{\sum_{i=1}^{n+1}f_i}{(1-\delta)f_n - 2\delta\sum_{i=1}^{n-1}f_i}\label{funcfac-equ}\\
                 &=& 4 \frac{(1+\delta)\left(2\,\frac{1+\delta}{1-\delta}\right)^{n+1}-\frac{1-\delta}{2}}{(1-\delta)^2 \left(2\,\frac{1+\delta}{1-\delta}\right)^{n+1}+4\delta(1-\delta)} <  4 \frac{1+\delta}{(1-\delta)^2}\nonumber
\end{eqnarray}
which altogether gives
$1+8\left(\frac{1+\delta}{1-\delta}\right)^2$
for the competitive ratio. Note, that $f_i=\left(2\,\frac{1+\delta}{1-\delta}\right)^i$ 
is a reasonable strategy, although handi\-capped by the error it monotonically
increases the distance to the start which is never smaller than the 
denominator in (\ref{simplify-equ})
  $$ (1-\delta)f_n - 2\delta\sum_{i=1}^{n-1}f_i = \frac1{(1+3\delta)}\left((1-\delta)(1+\delta)\left(2\,\frac{1+\delta}{1-\delta}\right)^{n}+4\delta(1+\delta)\right)\, .$$
Thus, every goal point will be reached.
\end{proof}

We want to show that the given factor is optimal. 
As we have seen in the proof of \reftheo{first} there is a constant upper bound   
of $G_{(n,\delta)}(F)$ that depends on $\delta$.  Now $C(F,\delta):= \sup_n G_{(n,\delta)}(F)$ 
defines the competitive value of a strategy $F$, 
and for every fixed $\delta$ there will 
be a strategy $F^*$ that yields $C^*_\delta:=\inf_F C(F,\delta)$. In the following 
we want to show that there is always a strategy $F^*$ that achieves 
$C^*_\delta$ exactly in every step, that is $G_{(n,\delta)}(F^*)=C^*_\delta$ for $n\geq 1$.
For example, for $\delta=0$ the strategy $f_i=(i+1)2^i$ 
exactly yields the factor 9 in 
every step. Note, that up to now only the existence of $C^*_\delta$ is known. 
The idea of using equality was mentioned in \cite{kpy-sfg-96} and used 
for a finite strategy in \cite{ls-ussmr-01}. We give a formal proof for 
infinite strategies with errors.

\begin{lem}{exact} 
In the presence of an error up to $\delta$ in the percentual error model
$[(1-\delta)f, (1+\delta)f]$ with $\delta\in[0,1[$,
there is always an optimal strategy $F^*$ that achieves the optimal (sub)factor $C^*_\delta$ 
exactly for all $G_{(n,\delta)}(F^*)$, that is $G_{(n,\delta)}(F^*)=C^*_\delta$ for $n\geq 1$. 
\end{lem}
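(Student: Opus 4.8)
The plan is to pass to partial sums and reduce the equality condition to a linear recurrence, then to exhibit an explicit equalizing strategy and match it with a lower bound. Writing $S_n:=\sum_{i=1}^n f_i$ (with $S_0=0$) and using $f_n=S_n-S_{n-1}$, the functional becomes
\[
G_{(n,\delta)}(F)=\frac{S_{n+1}}{(1-\delta)S_n-(1+\delta)S_{n-1}}\qquad(n\ge 1).
\]
Hence demanding $G_{(n,\delta)}(F)=C$ for every $n\ge 1$ is equivalent to the second–order linear recurrence $S_{n+1}=C(1-\delta)S_n-C(1+\delta)S_{n-1}$, whose characteristic polynomial $x^2-C(1-\delta)x+C(1+\delta)$ has discriminant $C^2(1-\delta)^2-4C(1+\delta)$. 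This is non-negative exactly when $C\ge \frac{4(1+\delta)}{(1-\delta)^2}$, which I will show is precisely the value of $C^*_\delta$.

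For the construction and the upper bound I take $C=\frac{4(1+\delta)}{(1-\delta)^2}$. Here the discriminant vanishes and the recurrence has the double root $\alpha=2\frac{1+\delta}{1-\delta}>1$, so its general solution is $S_n=(A+Bn)\alpha^n$; imposing $S_0=0$ forces $A=0$ and I set $S_n=n\,\alpha^{n-1}f_1$. Then $f_n=S_n-S_{n-1}=f_1\,\alpha^{n-2}\bigl(n(\alpha-1)+1\bigr)>0$, so the sequence is strictly increasing, the denominators $(1-\delta)S_n-(1+\delta)S_{n-1}=S_{n+1}/C$ are positive, and $F^*:=(f_n)_n$ is a valid strategy with $G_{(n,\delta)}(F^*)=C$ for all $n\ge 1$. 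After rescaling $F^*$ (which leaves every $G_{(n,\delta)}$, $n\ge1$, unchanged) so that $G_{(0,\delta)}(F^*)=f_1\le C$, this gives $C(F^*,\delta)=C$, hence $C^*_\delta\le \frac{4(1+\delta)}{(1-\delta)^2}$; for $\delta=0$ it recovers the strategy $f_i=(i+1)2^i$.

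The decisive and hardest step is the matching lower bound $C^*_\delta\ge\frac{4(1+\delta)}{(1-\delta)^2}$. Suppose some strategy satisfied $C(F,\delta)=C<\frac{4(1+\delta)}{(1-\delta)^2}$. Then $G_{(n,\delta)}(F)\le C$ gives, in partial-sum form, $S_{n+1}\le C(1-\delta)S_n-C(1+\delta)S_{n-1}$ for a strictly positive sequence $(S_n)$. Since the discriminant is now negative, the roots are $\rho e^{\pm i\theta}$ with $\rho=\sqrt{C(1+\delta)}$ and $\cos\theta=\frac{C(1-\delta)}{2\rho}\in(0,1)$; normalizing $W_n:=S_n/\rho^n>0$ turns the inequality into $W_{n+1}+W_{n-1}\le 2\cos\theta\,W_n$. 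By AM--GM, $2\sqrt{W_{n+1}W_{n-1}}\le 2\cos\theta\,W_n$, whence the ratios $r_n:=W_{n+1}/W_n$ obey $r_n\le\cos^2\theta\,r_{n-1}$ and therefore $r_n\to0$. On the other hand, dropping $W_{n+1}>0$ in the same inequality yields $W_{n-1}\le 2\cos\theta\,W_n$, \ie\ $r_{n-1}\ge\frac1{2\cos\theta}>0$, contradicting $r_n\to0$. Thus no such $C$ exists, $C^*_\delta=\frac{4(1+\delta)}{(1-\delta)^2}$, and the strategy $F^*$ above is optimal with $G_{(n,\delta)}(F^*)=C^*_\delta$ for every $n\ge1$, as claimed.

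The main obstacle is exactly this sign-changing second-order inequality: a naive termwise comparison with the homogeneous recurrence fails because the coefficient $-C(1+\delta)$ is negative, so the oscillation of the complex-root solution must be exploited indirectly. The AM--GM/ratio argument is what makes the infinite-strategy case go through without any compactness or finite-horizon approximation, which is the point emphasised in the remark preceding the lemma.
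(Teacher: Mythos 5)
Your argument is correct, but it takes a genuinely different route from the paper's --- and in fact proves strictly more than the lemma asks. The paper's proof is a soft existence argument: starting from an arbitrary $C^*_\delta$-competitive strategy it exploits the monotonicity of the functionals in each $f_n$ ($G_{(n,\delta)}$ is decreasing in $f_n$, while every $G_{(k,\delta)}$ with $k\neq n$ is increasing in $f_n$) to push every subfactor up to $C^*_\delta$ by an iterated adjust-and-pass-to-the-limit process, without ever identifying the value of $C^*_\delta$; that value is extracted only afterwards, in \reftheo{opt}, from the recurrence satisfied by the equalizing strategy. You instead linearize the functional via the partial sums $S_n$, read off the equalizing condition as a second-order linear recurrence, identify $C^*_\delta=4(1+\delta)/(1-\delta)^2$ as the critical value where the discriminant vanishes, and exhibit the double-root solution $S_n=n\,\alpha^{n-1}f_1$ explicitly (the analogue of $f_i=(i+1)2^i$); your AM--GM/ratio contradiction then yields the matching lower bound directly from the inequality $G_{(n,\delta)}(F)\leq C$, which sidesteps both the paper's limiting construction and the complex-root rotation argument of \reftheo{opt}. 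What the paper's route buys is that the adjustment lemma needs no formula for $C^*_\delta$ and transfers verbatim to other error models; what yours buys is a self-contained quantitative proof of the lemma and of \reftheo{opt} simultaneously, with a cleaner treatment of the infinite horizon. One small point you should make explicit: the passage from $G_{(n,\delta)}(F)\leq C$ to $S_{n+1}\leq C\left((1-\delta)S_n-(1+\delta)S_{n-1}\right)$ presupposes that the denominators $(1-\delta)S_n-(1+\delta)S_{n-1}$ are positive; as in the paper, this may be assumed without loss of generality, since a strategy whose reach fails to increase in some step is dominated and in particular cannot meet every goal.
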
 
\begin{proof} 
Let $F$ be a $C^*_\delta$ competitive strategy. We can assume that $F$ is 
strictly positive. We will show that for all $n\geq 1$ there 
is always a strategy $F'$ that fulfills $G_{(k,\delta)}(F')=C^*_\delta$ for all $1\leq k\leq n$, by successively adjusting $F$ adequately. 

First, note that 
$G_{(n,\delta)}(F)$ is decreasing in $f_n$. The value 
$(1-\delta)f_n$ $\mbox{}- 2\delta\sum_{i=1}^{n-1}f_i$ describes the 
distance to the start. We can assume that ${(1-\delta)f_n - 2\delta\sum_{i=1}^{n-1}f_i}>0$ holds, 
otherwise the strategy does not increase the distance to the start in this step and there is a 
better strategy for the worst-case. Now $G_{(n,\delta)}(F)$ equals a function $\frac{f_n+A}{C\cdot f_n-B}$ 
in $f_n$ with $A,B,C>0$ and ${Cf_n-B}>0$. The function   $\frac{f_n+A}{C\cdot f_n-B}$ is positive 
and strictly decreasing for ${Cf_n-B}>0$. The other way round, the function goes to infinity if 
$f_n$ decreases towards $f_n=\frac{B}{C}$. Altogether, if we decrease $f_n$ adequately we can 
increase $G_{(n,\delta)}(F)$ continously to whatever we want.  

Additionally, $G_{(k,\delta)}(F)$ is increasing in $f_n$ for all $k\neq n$. For $k< {n-1}$ 
$G_{(k,\delta)}(F)$ is not affected by $f_n$. For $k={n-1}$ the distance $f_n$ appears only 
in the nominator of  $G_{(k,\delta)}(F)$ which increases if $f_n$ grows.  
 For $k>n$ if $f_n$ grows, the denominator shrinks (it has the coefficient  $-2\delta$) and the nominator 
increases. The other way round, if $f_n$ decreases, $G_{(k,\delta)}(F)$ is 
decreasing in $f_n$ for all $k\neq n$. Altogether, if we decrease $f_n$ we will decrease 
all $G_{(k,\delta)}(F)$ for $k\neq n$. 

As indicated above let $F$ be a strictly positive $C^*_\delta$ competitive strategy. 
By induction over $n\geq 1$, we will show that we can decrease $F$ to a strictly positive 
strategy $F'$ which fulfills $G_{(k,\delta)}(F')=C^*_\delta$ for all $1\leq k\leq n$. 
Additionally, $F'$ equals $F$ for all $f_l$ and $l\geq n+1$. 
 
For $n=1$ let us assume that $G_{(1,\delta)}(F)<=C^*_\delta$ holds,
otherwise we are done. 
By the argumentation above we will decrease $f_1$ by a small value 
to $f_1'= f_1-\epsilon$ such that 
$G_{(1,\delta)}(F)=C^*_\delta$ holds. All other $G_{(k,\delta)}(F)$ with $k\neq 1$ will decrease, and
the new strategy is still $C^*_\delta$ competitive. Since $f_2$ is strictly positive, 
$f_1'$ has to be strictly positive. 

For the induction step, we assume that we decrease $F$ to a strictly positive strategy $F'$ such that 
$G_{(k,\delta)}(F')=C^*_\delta$ for all $1\leq k\leq n$. $F'$ equals $F$ for all $f_l$ and $l\geq n+1$.
Now let $G_{(n+1,\delta)}(F')<C^*_\delta$,
otherwise we are done. By the argumentation above we can 
decrease $f_{n+1}$ by $\varepsilon$ to $f_{n+1}'= 
f_{n+1}-\epsilon$ such that  $G_{(n+1,\delta)}(F')=C^*_\delta$ holds. 
With the considerations above we know that  $G_{(k,\delta)}(F')$ decreases for $k\neq n$. 
Since $f_{n+2}$ is strictly positive, $f'_{n+1}$ has to be strictly positive. 

Unfortunately, at least $G_{(n,\delta)}(F')<C^*_\delta$ holds and  we have to apply the induction hypothesis again. 
We decrease  $F'$ to $F''$ such that  $G_{(k,\delta)}(F'')=C^*_\delta$ holds for all $1\leq k\leq n$. 
Again $F''$ is strictly positive. Now in turn  we will have $G_{(n+1,\delta)}(F'')<C^*_\delta$ and 
we start the procedure again by decreasing $f'_{n+1}$ adequately. 
For convenience we denote $F''$ by $F'$ again. 

Altogether, for every $f'_l$ with $1\leq l\leq n+1$ in the above process we will have a strictly 
decreasing sequence which will not decrease towards zero. Therefore every  $f'_l$ runs toward a 
unique positive limit. For the limit values we will have $G_{(k,\delta)}(F')=C^*_\delta$ 
for $1\leq k\leq n+1$ which finishes the induction and the proof. \qed
\end{proof}

Fortunately, we can build a recurrence for the optimal strategy $F^*$ with $G_{(n,\delta)}(F^*)=C^*_\delta$
for $n\geq 3$. From $G_{(n-1,\delta)}(F^*)=G_{(n-2,\delta)}(F^*)=C^*_\delta$ 
we conclude
$\sum_{i=1}^{n}f^*_i=C^*_\delta\left((1-\delta)f^*_{n-1} - 2\delta\sum_{i=1}^{n-2}f^*_i\right)$ 
and  
$\sum_{i=1}^{n-1}f^*_i=C^*_\delta\left((1-\delta)f^*_{n-2} - 2\delta\sum_{i=1}^{n-3}f^*_i\right)$.
Subtracting both yields for all $n\geq 3$
\begin{equ}{rec}
f^*_{n}=C^*_\delta(1-\delta)f^*_{n-1}-C^*_\delta(1+\delta)f^*_{n-2}   
\end{equ}%
\begin{theo}{opt}
In the presence of an error up to $\delta$ in the percentual error model
$[(1-\delta)f, (1+\delta)f]$ with $\delta\in[0,1[$,
there is no competitive strategy for searching a point on a line 
that yields a factor smaller than  
$1+8\left(\frac{1+\delta}{1-\delta}\right)^2$. 
\end{theo}
\begin{proof} 
We solve the recurrence \refequ{rec} using methods described in 
Graham et.\,al.~\cite{gkp-cm-94}. The characteristic polynom of 
recurrence \refequ{rec} is given by 
\begin{equ}{char}
X^2-C^*_\delta(1-\delta)X+C^*_\delta(1+\delta)\,,
\end{equ} 
which has the roots 
\begin{equ}{sol1}
\lambda, \overline{\lambda}
  =  \frac12\left((1-\delta) C^*_\delta 
          \pm\sqrt{C^*_\delta\left(C^*_\delta(1-\delta)^2-4(1+\delta)\right)}\right)
\end{equ}%
where $\overline{\lambda}$ denotes the conjugate of $\lambda$. 
Now, the recurrence is given in the closed form 
$f^*_{n}=a \lambda^n+ \overline{a}\overline{\lambda}^n = 2\mbox{Re}(a\lambda^n)$ 
where $\mbox{Re}(w)$ denotes the real part, $c$,
of a complex number $w = c+di$ and $a$ and $\overline{a}$ are determined by the equations 
$a +  \overline{a} =  f_1$ and 
$a \lambda+ \overline{a}\overline{\lambda} = f_2$,
and in turn by the starting values $f_1$ and $f_2$. 
If we represent complex numbers by points in the plane, multiplication of
two numbers entails adding up the corresponding angles they form with
the positive $X$-axis. If the radiant $C^*_\delta\left(C^*_\delta(1-\delta)^2-4(1+\delta)\right)$
of $\lambda$  is negative, $\lambda$ is  not real and its angle is not equal to 0. 
Consequently, in this case there exists a smallest natural number $s$ such that 
$a\lambda^s$ lies in the left halfplane $\{X< 0\}$, so that $f_s$ becomes negative.
The roots of the radiant are $0$ and 
$4\frac{(1+\delta)}{(1-\delta)^2}$ which shows that 
$f^*_{n}$  gets negative if $C^*_\delta< 4\frac{(1+\delta)}{(1-\delta)^2}$ 
holds. The optimal strategy $F^*$ with $G_{(n,\delta)}(F^*)=C^*_\delta$  
has to be positive, therefore the overall competitive factor has to be at 
least $1+8\left(\frac{1+\delta}{1-\delta}\right)^2$,
which exactly matches the factor of the strategy described in \reftheo{first}.
\end{proof}

The proof also holds for $\delta=0$ which gives another proof of the 
factor $9$. Thus, line search with errors is generalized adequately.

\begin{prop}{second} 
In the presence of an error up to $\delta$ in
the standard multiplicative error model
$[\frac1{(1+\delta)}f, (1+\delta)f]$ for $\delta>0$ there is a competitive 
strategy that always meets the goal and achieves a factor of
$1+8(1+\delta)^4$. There is no strategy that achieves 
a better competitive factor.
\end{prop}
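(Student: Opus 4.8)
The plan is to mirror the development for the percentual model almost verbatim, the only change being the pair of worst-case covered distances. In the standard multiplicative model the drift is maximised, exactly as in \refprop{first}, by $\DM_i=(1+\delta)f_i$ and $\DP_i=\frac{1}{1+\delta}f_i$. First I would prove the upper bound (the analogue of \reftheo{first}) by exhibiting an explicit geometric strategy, and then establish optimality (the analogues of \reflem{exact} and \reftheo{opt}) through the equality/recurrence argument. No conceptually new ingredient is needed; everything reduces to re-running the three earlier proofs with $(1-\delta)$ replaced by $\frac{1}{1+\delta}$ in the step towards the door.

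For the upper bound I would start from the model-independent worst-case ratio \refequ{wcequ} and insert $\DM_i=(1+\delta)f_i$, $\DP_i=\frac1{1+\delta}f_i$. Writing the ratio as $1+2(1+\delta)\,\tilde G_{(n,\delta)}(F)$ with
\[
\tilde G_{(n,\delta)}(F)=\frac{\sum_{i=1}^{n+1}f_i}{\frac1{1+\delta}f_n-\frac{(1+\delta)^2-1}{1+\delta}\sum_{i=1}^{n-1}f_i}\,,
\]
it suffices to minimise $\tilde G_{(n,\delta)}$. Substituting the geometric ansatz $f_i=\alpha^i$ and letting $n\to\infty$ collapses the objective to $(1+\delta)\,\frac{\alpha^2}{\alpha-(1+\delta)^2}$; one-variable calculus gives the minimiser $\alpha=2(1+\delta)^2$, at which $\tilde G_{(n,\delta)}\to 4(1+\delta)^3$, and hence the competitive factor tends to $1+8(1+\delta)^4$ (the single-step starting case yields a smaller factor and is subsumed, as in \reftheo{wc}). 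As in the closing remark of \reftheo{first}, I would then check that for $f_i=(2(1+\delta)^2)^i$ the denominator of $\tilde G_{(n,\delta)}$ stays strictly positive for every $\delta>0$ --- the factor $\frac{(1+\delta)^2-1}{2(1+\delta)^2-1}$ that multiplies the partial sum is strictly below $1$ --- so the distance to the start grows monotonically and every goal is reached. In contrast with the non-adaptive strategy of \refprop{first}, no upper restriction on $\delta$ arises.

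For the lower bound I would transport \reflem{exact} unchanged: $\tilde G_{(n,\delta)}$ is again of the form $\frac{f_n+A}{C f_n-B}$ with $A,B,C>0$, and the coefficient $-\frac{(1+\delta)^2-1}{1+\delta}$ of the partial sum in the denominator is negative for $\delta>0$, exactly like the $-2\delta$ of the percentual case, so the monotonicity statements ($\tilde G_{(n,\delta)}$ strictly decreasing in $f_n$, each $\tilde G_{(k,\delta)}$ with $k\neq n$ increasing in $f_n$) survive. This produces an optimal strategy $F^*$ with $\tilde G_{(n,\delta)}(F^*)=C^*_\delta$ for all $n$. Subtracting the defining equations of two consecutive steps, as for \refequ{rec}, gives the recurrence
\[
f^*_n=\frac{C^*_\delta}{1+\delta}\,f^*_{n-1}-C^*_\delta(1+\delta)\,f^*_{n-2}\,,
\]
whose characteristic polynomial $X^2-\frac{C^*_\delta}{1+\delta}X+C^*_\delta(1+\delta)$ has discriminant $\frac{C^*_\delta}{(1+\delta)^2}\bigl(C^*_\delta-4(1+\delta)^3\bigr)$. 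Applying the complex-root argument of \reftheo{opt} (a non-real dominant root eventually rotates $f^*_n$ into the negative reals), positivity of $F^*$ forces $C^*_\delta\geq 4(1+\delta)^3$, i.e.\ an overall factor of at least $1+2(1+\delta)\cdot4(1+\delta)^3=1+8(1+\delta)^4$, matching the upper bound.

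The work is essentially bookkeeping, and the one place demanding care is this discriminant. Because the asymmetric factor $\frac1{1+\delta}$ turns the clean coefficient $2\delta$ into $(1+\delta)^2-1$, I must keep the discriminant in the factored form $\frac{C^*_\delta}{(1+\delta)^2}\bigl(C^*_\delta-4(1+\delta)^3\bigr)$ to read off the sharp threshold $C^*_\delta=4(1+\delta)^3$, and I must re-verify that each monotonicity and positivity inequality borrowed from \reflem{exact} and \reftheo{opt} indeed still holds under the substitution. Once these sign checks are done the two bounds coincide and the proposition follows.
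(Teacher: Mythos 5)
Your proposal is correct and follows essentially the same route as the paper: the same worst-case assignment $\DM_i=(1+\delta)f_i$, $\DP_i=\frac1{1+\delta}f_i$, the same optimal geometric strategy $f_i=(2(1+\delta)^2)^i$ with the same positivity check on the denominator, and the same equality-lemma/recurrence/characteristic-polynomial argument for the lower bound. The only (harmless) difference is a normalization choice --- you keep the factor $\frac1{1+\delta}$ inside the functional, so your threshold reads $C^*_\delta\geq 4(1+\delta)^3$ with prefactor $2(1+\delta)$ where the paper has $4(1+\delta)^2$ with prefactor $2(1+\delta)^2$, yielding the identical bound $1+8(1+\delta)^4$.
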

\begin{proof}
We follow the lines of \reftheo{first}, \reflem{exact} and \reftheo{opt}. 
In the standard multiplicative error model,
Equation \refequ{simplify} in the 
proof of \reftheo{first} reads 
$$\frac{|\pi_{\mathrm{onl}}|}{d} 
= 1 + 2(1+\delta)^2\frac{\sum_{i=1}^{2j+1}f_i}{f_{2j} - \delta(2+\delta)\sum_{i=1}^{2j-1}f_i + \varepsilon}$$
and it suffices to consider the functionals (compare to \refequ{func})
$$ G_{(n,\delta)}(F):=\frac{\sum_{i=1}^{n+1}f_i}{f_{n} - \delta(2+\delta)\sum_{i=1}^{n-1}f_i}\, .
$$
Analogously, the best doubling strategy $f_i=\alpha^i$ can be found by 
simple analysis which gives $\alpha=2(1+\delta)^2$. The corresponding factor 
 $G_{(n,\delta)}(F)< 4\,(1+\delta)^2$
can be computed as shown in (\ref{funcfac-equ}) 
which gives an overall factor of 
 $1+8(1+\delta)^4$ for the strategy $f_i=(2(1+\delta)^2)^i$. 

\medskip\noindent
The strategy proceeds in every iteration step at least by
\begin{eqnarray*}
{f_{n} \over 1+\delta } - \Delta_{n}
&=&
{\alpha^n \over 1+\delta} 
- {\delta\,(2+\delta) \over 1+\delta}\cdot
\sum_{i=1}^{n-1} \alpha^i\\
&=&
{\alpha^n \over 1+\delta}
- {\delta\,(2+\delta) \over 1+\delta}\cdot
{\alpha^n-\alpha
\over \alpha-1}\\
&=&
{(1+4\delta+2\delta^2)\,\alpha^n
-\delta\,(2+\delta)\,\alpha^n
+\delta\,(2+\delta)\,\alpha
\over (1+\delta)(1+4\delta+2\delta^2)}\\
&=& 
{(1+2\delta+\delta^2)\,\alpha^n
+\delta\,(2+\delta)\,\alpha
\over (1+\delta)(1+4\delta+2\delta^2)} > 0 \quad \mbox{for~} \delta > 0
\end{eqnarray*}
and the strategy will reach every goal.

It remains to show that the given strategy is optimal. \reflem{exact} also 
holds for the multiplicative error model. With the same techniques we 
adapt a given strategy such that the optimal factor holds in every step. 
the corresponding recurrence of an optimal strategy, see \refequ{rec}, 
is now given by 
$$
f^*_{n+1}=C^*_\delta f^*_{n-1}-C^*_\delta(1+\delta)^2f^*_{n-2}\, .   
$$
We consider the characteristic polynom which is
$X^2-C^*_\delta X+C^*_\delta(1+\delta)^2$, see (\ref{char-equ}). 
The polynom has the roots
\begin{eqnarray*}
\lambda, \overline{\lambda}  =  \frac12\left( C^*_\delta 
          \pm\sqrt{C^*_\delta(C^*_\delta-4(1+\delta)^2)}\right)
\end{eqnarray*}
Now with the same arguments as in the proof of \reftheo{opt} the radiant is 
non-negative for $C^*_\delta\geq 4(1+\delta)^2$ and the competitive 
factor has to be at least $1+8(1+\delta)^4$. \qed
\end{proof}

\sect{Error afflicted searching on \boldmath{$m$} rays}{mray}

The \agent\ is located at the common endpoint of $m$ infinite rays. The
target is located on one of the rays, but---as above---the \agent\ 
neither knows the ray containing the target nor the distance to the target.
It was shown by Gal~\cite{g-sg-80} that 
w.l.o.g.\  one can visit the rays in a cyclic order and with increasing depth. 
Strategies with this property are called {\em periodic} and {\em monotone}.   
More precisely, the values $f_i$ of a strategy $F$ denote the depth of a 
search in the $i$-th step. Further, $f_i$ and  $f_{i+m}$ visit the same 
ray, and $f_i<f_{i+m}$ holds. An optimal strategy is defined by 
$f_i=\left(\frac{m}{m-1}\right)^i$.

In the error afflicted setting, the start point of every iteration cannot drift away, since
the start point is the only point where all rays meet and the robot 
has to recognize this point. Otherwise we can not guarantee that all rays are 
visited.
Let us first assume that the error $\delta$ is known. 
Surprisingly, it will turn out that we do not have 
to distinguish whether $\delta$ is known or unknown to the
strategy.

\begin{theo}{mwaykomp}
Assume that an error afflicted robot with error range $\delta$ in 
the percentual error model is given.  
Searching for a target located on one of $m$ rays using 
a monotone and periodic strategy is competitive with an optimal factor of 
$$3+2\,\frac{1+\delta}{1-\delta}\left(\frac{m^m}{(m-1)^{m-1}}-1\right)$$
for $\delta<\frac{e-1}{e+1}$. 
\end{theo}

\begin{proof}
A periodic strategy $F$ with nominal values $f_1,f_2,f_3,\ldots$ is
monotone if $(1-\delta)f_k>(1+\delta)f_{k-m}$ holds. Now
let $\ell_i$ denote the distance covered by the error afflicted
agent in the step $i$. In analogy to the 
line case, we achieve the worst case, if the target is slightly missed in step $k$, but hit
in step $k+m$. This yields
$${|\pi_{\mathrm{onl}}| \over d}=
1+{2\sum_{i=1}^{k+m-1}\ell_i\over \ell_k+\varepsilon}\,.$$

This ratio achieves its maximum for $F$, if we  maximize 
every $\ell_i, i\neq k$ and take a worst case value for $\ell_k$. 
Therefore we set $\ell_k := (1-\beta)f_k$ and
$\ell_i := (1+\delta)f_i, i\neq k$ for $\beta\in[-\delta,\delta]$. 
For convenience we ignore $\varepsilon$ from now on.
We add $2(1+\delta-(1-\beta))f_k-2(1+\delta-(1-\beta))f_k)=0$ to the sum and obtain  
\begin{eqnarray}
{|\pi_{\mathrm{onl}}| \over d} & = & 
{1-2\,\frac{1+\delta-(1-\beta)}{1-\beta}+
2\,\frac{1+\delta}{1-\beta}\;\frac{\sum_{i=1}^{k+m-1}f_i}{f_k}}
\label{loes-equ}\\
& = & 3 + 2\,\frac{1+\delta}{1-\beta}\,\left(\frac{\sum_{i=1}^{k+m-1}f_i}{f_k}-1\right)\,.\nonumber
\end{eqnarray}
The functionals $G_k(F):=\frac{\sum_{i=1}^{k+m-1}f_i}{f_k}$ are identical 
to the functionals considered in the error-free $m$-ray search. 
From these results we know that the strategy  
$f_i=\left(\frac{m}{m-1}\right)^i$ 
gives the optimal upper bound $G_k(F)< \frac{m^m}{(m-1)^{m-1}}$, see 
\cite{bcr-sp-93,g-sg-80}. Now the adversary has the chance to maximize
$3+2\,\frac{1+\delta}{1-\beta}\left(\frac{m^m}{(m-1)^{m-1}}-1\right)$
over $\beta$ which obviously gives  $\beta=\delta$ 
Altogether, the factor and the optimality are proven. Note, 
that the optimal strategy is independent from $\delta$, thus there is no 
difference between known or unknown error range. 

We still have to ensure that $f_i=\left(\frac{m}{m-1}\right)^i$ is monotone
which means that the inequality
$(1-\delta)\left({m \over m-1}\right)^k > 
(1+\delta)\left({m \over m-1}\right)^{k-m}$
should be fulfilled, which in turn is equivalent to
$\delta < { \left({m \over m-1}\right)^m -1 \over 
\left({m \over m-1}\right)^m + 1} =: \delta_{\mathrm{max}}(m).$
Since $\delta_{\mathrm{max}}(m) 
\longrightarrow_{m \rightarrow \infty}
\frac {e - 1}{e + 1} \approx 0.4621$, we know that the best strategy
is given by $f_i=\left(\frac{m}{m-1}\right)^i$ if $\delta < 0.4621$ holds. 
\qed
\end{proof}

\begin{prop}{mwaykomp}
Assume that an error afflicted robot with error range $\delta$ in
the standard multiplicative error model is given.  
Searching for a target located on one of $m$ rays using 
a monotone and periodic strategy is competitive with an optimal factor of 
$3 + 2(1+\delta)^2\left(\frac{m^m}{(m-1)^{m-1}}-1\right)$
for $\delta<\sqrt{e}-1$. 
\end{prop}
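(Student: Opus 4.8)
The plan is to rerun the argument of \reftheo{mwaykomp} essentially verbatim, changing only the quantities that the error model touches. First I would record the monotonicity condition in the multiplicative model: a periodic strategy is monotone exactly when the shortest possible depth reached on the $i$-th visit of a ray exceeds the longest possible depth of the previous visit of that ray, i.e.\ $\frac1{1+\delta}f_k>(1+\delta)f_{k-m}$, equivalently $f_k>(1+\delta)^2 f_{k-m}$. This is precisely the condition guaranteeing that the searched depth on every ray strictly increases from visit to visit, so that any target is eventually passed and hence met.

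Next I would reproduce the worst-case ratio. Since the common source cannot drift, the adversary again places the target so that it is just missed in step $k$ but hit in step $k+m$, giving $\frac{|\pi_{\mathrm{onl}}|}{d}=1+\frac{2\sum_{i=1}^{k+m-1}\ell_i}{\ell_k+\varepsilon}$. To maximize this I set $\ell_i=(1+\delta)f_i$ for $i\neq k$ and parametrise the target step as $\ell_k=\gamma f_k$ with $\gamma\in[\frac1{1+\delta},1+\delta]$, which is the multiplicative analogue of the parameter $\beta$. Adding and subtracting $(1+\delta)f_k$ in the numerator and simplifying yields
$$\frac{|\pi_{\mathrm{onl}}|}{d}=3+\frac{2(1+\delta)}{\gamma}\left(\frac{\sum_{i=1}^{k+m-1}f_i}{f_k}-1\right),$$
the counterpart of Equation~\refequ{loes}, with $\frac{1+\delta}{\gamma}$ now playing the role of $\frac{1+\delta}{1-\beta}$.

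The key observation, exactly as before, is that the functionals $G_k(F)=\frac{\sum_{i=1}^{k+m-1}f_i}{f_k}$ coincide with those of the error-free $m$-ray search, so I can import the known fact that $f_i=\left(\frac{m}{m-1}\right)^i$ minimises $\sup_k G_k(F)$ with value $\frac{m^m}{(m-1)^{m-1}}$. The adversary then maximises the prefactor by taking $\gamma$ as small as possible, namely $\gamma=\frac1{1+\delta}$, which turns $\frac{1+\delta}{\gamma}$ into $(1+\delta)^2$ and produces the claimed factor $3+2(1+\delta)^2\left(\frac{m^m}{(m-1)^{m-1}}-1\right)$. Optimality comes for free: the overall factor is a strictly increasing function of $\sup_k G_k(F)$, and the error-free lower bound on this supremum is known to be $\frac{m^m}{(m-1)^{m-1}}$, so no monotone periodic strategy can do better; moreover the optimal strategy is independent of $\delta$, so knowing or not knowing the error range is immaterial.

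Finally I would verify admissibility of the chosen strategy: substituting $f_i=\left(\frac{m}{m-1}\right)^i$ into $f_k>(1+\delta)^2 f_{k-m}$ gives $\left(\frac{m}{m-1}\right)^m>(1+\delta)^2$, i.e.\ $\delta<\left(\frac{m}{m-1}\right)^{m/2}-1$, and letting $m\to\infty$ this tends to $\sqrt{e}-1$, matching the hypothesis. The proof is almost entirely a transcription, so there is no serious obstacle; the only points deserving a careful check are that the adversary's extremal choice is indeed $\gamma=\frac1{1+\delta}$ rather than the upper end of its range (which is immediate once one notes $G_k(F)-1>0$, so the ratio is decreasing in $\gamma$), and that the limit computation for the monotonicity bound genuinely yields $\sqrt{e}-1$.
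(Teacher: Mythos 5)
Your proposal is correct and follows essentially the same route as the paper's proof: the paper parametrises the adversary's choice on the target step as $\ell_k=\frac1{1+\beta}f_k$ with $\beta\ge 0$ and maximises over $\beta$ to get $\beta=\delta$, which is exactly your $\gamma=\frac1{1+\gamma'}$ reparametrisation, and the add-and-subtract manipulation, the reduction to the error-free functionals $G_k(F)$, and the monotonicity bound $\delta<\sqrt{(m/(m-1))^m}-1\le\sqrt e-1$ all match. No substantive differences.
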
 
\begin{proof}
With the same arguments as in the proof of \reftheo{mwaykomp} the 
worst case ratio
$$
{|\pi_{\mathrm{onl}}| \over d}=
1+{2\sum_{i=1}^{k+m-1}\ell_i\over \ell_k+\varepsilon}\,,
$$
will be maximized if 
we set $\ell_k := \frac1{(1+\beta)}f_k$ and
$\ell_i := (1+\delta)f_i, i\neq k$ for $\beta\geq 0$.
We add $2\left((1+\delta)-\frac1{1+\beta}\right)f_k-2
          \left((1+\delta)-\frac1{1+\beta}\right)f_k$ 
to the sum  and achieve

\begin{eqnarray*}
{|\pi_{\mathrm{onl}}| \over d} & = & 
{1-2(1+\beta)\left((1+\delta)-\frac1{1+\beta}\right)+2(1+\beta)(1+\delta)\frac{\sum_{i=1}^{k+m-1}f_i}{f_k}}\\
& = & 3 + 2(1+\beta)(1+\delta)\left(\frac{\sum_{i=1}^{k+m-1}f_i}{f_k}-1\right)\,.\label{opt2-equ}
\end{eqnarray*}
Analogously, the best strategy for the functional  $G_k(F):=\frac{\sum_{i=1}^{k+m-1}f_i}{f_k}$ is given by $f_i=\left(\frac{m}{m-1}\right)^i$
and the adversary has the chance to maximize 
$$3 + 2(1+\beta)(1+\delta)\left(\frac{m^m}{(m-1)^{m-1}}-1\right)$$
which gives $\beta=\delta$. 

Preserving for monotonicity means $\frac1{(1+\delta)}\left({m \over m-1}\right)^k > (1+\delta)\left({m \over m-1}\right)^{k-m}$
should be fulfilled, which in turn is equivalent to
$\delta < \sqrt{\left({m \over m-1}\right)^m} -1\leq \sqrt{e}-1\,.$ \qed
\end{proof}

\sect{Summary}{concl}
We have analyzed the standard doubling strategy for reaching a door along 
a wall in the presence of errors in movements.  
We showed that the robot is still able to reach the door if the
error $\delta$ is not greater
than  $\frac13$ (33 per cent on a single step). 
The competitive ratio of the doubling strategy is given by 
$8 \frac{1+\delta}{1-3\delta}+1.$
The error bound is rather big, so it can 
be expected that real robots will meet this error. 
If the maximal error is known in advance the strategy 
$f_i=\left(2\,\frac{1+\delta}{1-\delta}\right)^i$ is the optimal 
competitive strategy with a competitive factor of 
$1+8\left(\frac{1+\delta}{1-\delta}\right)^2$. It was shown that 
the analysis technique can be applied to different error models.

In case of $m$ rays the problem is easier to solve since the \agent\
detects the start point after each return.  
If the error $\delta$ is not greater than 
$\delta_{\mathrm{max}}(m) = { \left({m \over m-1}\right)^m -1 \over 
\left({m \over m-1}\right)^m + 1}$, which is less than 
$\frac{e-1}{e+1}\approx 0.46212$ for all $m$, the
standard $m$-ray doubling strategy with $f_i=\left(\frac{m}{m-1}\right)^i$ is 
the optimal periodic and monotone strategy and 
yields a factor 
$3+2\,\frac{1+\delta}{1-\delta}\left(\frac{m^m}{(m-1)^{m-1}}-1\right)$. 


\nocite{py-spwm-91}
\nocite{k-maole-05}

\bibliographystyle{abbrv}
\bibliography{../../../abt1/biblio/local,%
              ../../../abt1/biblio/update,%
              ../../../abt1/biblio/geom}

\clearpage
\begin{appendix}
\noindent{\LARGE \bfseries{APPENDIX}}

\section{Finding the optimal strategy in \reftheo{first}}

For a fixed $\delta$, $1$ and $2(1+\delta)$ is constant, and 
it is sufficient to find a strategy that minimizes 
\begin{equ}{GndS}
G_{(n,\delta)}(S):=
{\sum_{i=1}^{n+1}f_i \over (1-\delta)\,f_n - 2\delta\sum_{i=1}^{n-1}f_i}
\quad \mbox{~for~}\,n\geq 1
\end{equ} 

\noindent
and $G_{(0,\delta)}(S):={f_1 \over 1}$
where $S$ denotes the strategy $f_1,f_2,f_3,\ldots$.
$G_{(0,\delta)}(S)$ is the worst case after the first iteration step.

\medskip
Now, we are searching for a strategy $S_\alpha$ 
in the form $f_i=\alpha^i$
with a fixed $\alpha$, possibly depending on $\delta$, that
asymptotically minimizes 
\begin{eqnarray*}
G_{n,\delta)}(S_\alpha)
&=&
{\sum_{i=1}^{n+1}\alpha^i \over (1-\delta)\,\alpha^n 
- 2\delta\sum_{i=1}^{n-1}\alpha^i}\\
&=&
{{\alpha^{n+2} - \alpha\over \alpha-1} 
\over
(1-\delta)\,\alpha^n -2\delta\,{\alpha^{n} - \alpha\over \alpha-1}}\\
&=&
{\alpha^2 - {1\over \alpha^{n-1}}
\over
(\alpha-1)(1-\delta) -2\delta + {2\delta \over \alpha^{n-1}}}\\
&<&
{\alpha^2
\over
(1-\delta)\,\alpha-\delta-1}
\;=:\;H_\delta(\alpha)
\end{eqnarray*}

\noindent
To find a minimum of $H_\delta(\alpha)$ we derivate and find the roots 
\begin{eqnarray*}
H'_\delta(\alpha) & = &
{2\alpha\,((1-\delta)\,\alpha-\delta-1) - (1-\delta)\,\alpha^2
\over
((1-\delta)\,\alpha-\delta-1)^2}\\
& = &
{(1-\delta)\,\alpha^2-2\,(1+\delta)\,\alpha
\over
(1-\delta)^2\,\alpha^2-2\,(1-\delta^2)\,\alpha+(1+\delta)^2} = 0\\
&\Leftrightarrow&
(1-\delta)\,\alpha^2-2\,(1+\delta)\,\alpha = 0\\
&\Leftrightarrow& 
\alpha =0 \quad \vee \quad \alpha={2\,(1+\delta)\over 1-\delta}
\end{eqnarray*}

A strategy with $\alpha=0$ will not move the robot at all, so 
$\alpha=2\,{1+\delta\over 1-\delta}$ is the only
reasonable root. Note, that the denominator of 
$H'_\delta(2\,{1+\delta\over 1-\delta})$
yields $(1+\delta)^2\neq 0$ for $\delta\geq 0$. To test whether 
this $\alpha$ is a maximum or minimum,
we use the second derivative.
Since we want to evaluate $H''_\delta(\alpha)$ only for the roots of
the numerator of $H'_\delta(\alpha)$, we can use a simplified
form%
\footnote{The derivative of a function of type $f(x)={N(x)\over D(x)}$
is $f'(x)={D(x)\cdot N'(x)-N(x)\cdot D'(x)\over (D(x))^2}$.
If we want to evaluate $f'(x)$ only for the roots of
$N(x)$, the derivative simplifies to
$f'\big|_{N(x)=0}(x)={N'(x) \over D(x)}$.}:

$$H''_\delta\Big|_{N'(x)=0}(\alpha)= 
{2\,(1-\delta)\,\alpha -2\,(1+\delta) \over (1+\delta)^2}\;.$$
This yields ${2\over 1+\delta} > 0$ for 
$\alpha=2\,{1+\delta\over 1-\delta}$, so we have found a minimum.
\end{appendix}

\end{document}